\numberwithin{equation}{section}
\newtheorem{theorem}{Theorem}[section]
\newtheorem{proposition}[theorem]{Proposition}
\newtheorem{cor}[theorem]{Corollary}
\newtheorem{rem}[theorem]{Remark}
\renewcommand{\ge}{\geq}
\renewcommand{\le}{\leq}
\newcommand{\ind}{\mathbf{1}}
\newcommand{\R}{\mathbb{R}}
\newcommand{\Z}{\mathbb{Z}}
\newcommand{\N}{\mathbb{N}}
\renewcommand{\tilde}{\widetilde}
\renewcommand{\hat}{\widehat}
\newcommand{\cC}{{\ensuremath{\mathcal C}} }
\newcommand{\cT}{{\ensuremath{\mathcal T}} }
\newcommand{\bP}{{\ensuremath{\mathbf P}} }
\newcommand{\bE}{{\ensuremath{\mathbf E}} }
\DeclareMathSymbol{\leqslant}{\mathalpha}{AMSa}{"36} 
\DeclareMathSymbol{\geqslant}{\mathalpha}{AMSa}{"3E} 
\DeclareMathSymbol{\eset}{\mathalpha}{AMSb}{"3F}     
\renewcommand{\leq}{\;\leqslant\;}                   
\renewcommand{\geq}{\;\geqslant\;}                   
\newcommand{\bbR}{{\ensuremath{\mathbb R}} }
\newcommand{\bbZ}{{\ensuremath{\mathbb Z}} }
\newcommand{\gep}{\varepsilon}       
\newcommand{\gG}{\Gamma}
\newcommand{\gO}{\Omega}
\newcommand{\gL}{\Lambda}
\def\captionfont@{\footnotesize}
\def\captionheadfont@{\scshape}
\long\def\@makecaption#1#2{%
  \vspace{2mm}
  \setbox\@tempboxa\vbox{\color@setgroup
    \advance\hsize-6pc\noindent
    \captionfont@\captionheadfont@#1\@xp\@ifnotempty\@xp
        {\@cdr#2\@nil}{.\captionfont@\upshape\enspace#2}%
    \unskip\kern-6pc\par
    \global\setbox\@ne\lastbox\color@endgroup}%
  \ifhbox\@ne 
    \setbox\@ne\hbox{\unhbox\@ne\unskip\unskip\unpenalty\unkern}%
  \fi
  \ifdim\wd\@tempboxa=\z@ 
    \setbox\@ne\hbox to\columnwidth{\hss\kern-6pc\box\@ne\hss}%
  \else 
    \setbox\@ne\vbox{\unvbox\@tempboxa\parskip\z@skip
        \noindent\unhbox\@ne\advance\hsize-6pc\par}%
\fi
  \ifnum\@tempcnta<64 
    \addvspace\abovecaptionskip
    \moveright 3pc\box\@ne
  \else 
    \moveright 3pc\box\@ne
    \nobreak
    \vskip\belowcaptionskip
  \fi
\relax
}
\def\writefig#1 #2 #3 {\rlap{\kern #1 truecm
\raise #2 truecm \hbox{#3}}}
\newcommand{\Tm}{T_{\rm mix}}
\author{Hubert Lacoin}
\address{CEREMADE, Place du Mar\'echal De Lattre De Tassigny
75775 PARIS CEDEX 16 - FRANCE}
\email{lacoin@ceremade.dauphine.fr}
\title[Zero-temperature stochastic Ising model in any dimension]{Approximate Lifshitz law for the
zero-temperature stochastic Ising  model in
any dimension}
\begin{document}

\begin{abstract}
We study the Glauber dynamics for the zero-temperature stochastic Ising model in dimension 
$d\ge 4$ with ``plus'' boundary condition.
Let $\cT_+$ be the time needed for an hypercube of size $L$ 
entirely filled with ``minus'' spins to become entirely ``plus''. We prove that $\cT_+$  
is $O(L^2(\log L)^c)$ for some constant $c$, 
not depending on the dimension. This brings further rigorous justification for the so-called ``Lifshitz law''
 $\cT_+=O(L^2)$ \cite{Lif, FH} conjectured on heuristic grounds. 
The key point of our proof is to use the detailed knowledge that we have on the three-dimensional problem: 
results for fluctuation 
of monotone interfaces at equilibrium and mixing time for monotone interfaces dynamics extracted from \cite{CMST}, 
to get the result in higher dimension.\\
2010 \textit{Mathematics Subject Classification: 60K35, 82C20.
  }\\
  \textit{Keywords: Mixing time, Ising Model, Glauber Dynamics, Interface.}
\end{abstract}

\maketitle 

\section{Introduction}

We study the zero-temperature stochastic Ising model in a finite hypercube of side-length $L$ in $\Z^d$ with plus boundary condition.
Initially, the whole cube is filled with minus spins. The spins evolve following the majority rule:
with rate one each spin takes the same value as the majority of
its neighbors (including those at the boundary) if the latter is well
defined. Otherwise its sign is determined by tossing a fair coin.  
Eventually, all minus spins disappear, as a result of the pressure imposed by plus spins from the boundary.

Our aim is to study the time $\cT_+$ needed for the spins in the hypercube to become entirely plus.
 According to a heuristics from Lifshitz \cite{Lif}, on a macroscopic
 scale, each point of the rescaled interface 
between plus and minus spins should move feeling a local drift
proportional to its local mean curvature (this conjecture has been recently proved with co-authors for the 
zero-temperature two-dimensional case with macroscopically convex initial domain of $-$ spins:
see \cite{LST}). This readily implies that,
with high probability, $\cT_+=O(L^2)$.

Here we prove that, with high probability, $\cT_+=O(L^2(\log L)^{10})$
for all dimension $d\ge 4$. Our result complements existing analogous bounds for dimension 
$d=2$ and $d=3$ obtained in \cite{FSS, CMST}. 
It is important to keep in mind that the cases $d=2$ and $d=3$ are completely
different, mainly because the equilibrium fluctuations of Ising
interfaces in $d=2$ and in $d=3$ occur on very different scales ($O(L^{1/2})$ for $d=2$ and
$O(\log(L))$ for $d=3$).  As such, they have been analyzed by using very
different approaches. The case $d\ge 4$ should be more similar to the
case $d=3$, which therefore plays the role of  a critical
dimension. However, contrary to what has been done
in \cite{CMST}, our proof does not attempt to control the local mean drift of the
interface. Such an approach, in fact, would at least require two main missing tools: (i) a detailed analysis
of the local equilibrium fluctuations of Ising interfaces\footnote{In dimension $d\ge 4$ the
  fluctuations of Ising hyper-surfaces are believed to be order
  one. Recently a result of this sort has been proved under a Lifshitz
  condition \cite{Peled}.}; (ii) good
estimates on mixing times for $(d-1)$-monotone surfaces, following
e.g.\ the method developed by Wilson \cite{Wilson} for the three-dimensional case.  Instead, we
use an ad-hoc construction that allows us to bound the hitting time
$\cT_+$ for the $d$-dimensional dynamics using known sharp estimates for the
same hitting time in three dimensions \cite{CMST}.
One of the reasons why this approach is successful is that in dimension three, Ising interfaces
are flat with only logarithmic fluctuations
(see e.g.\ Proposition 4 in \cite{CMST} for more details). 

\medskip

Lifshitz's law is a general prediction for the mixing-time of the stochastic Ising model in the whole low-temperature
phase and not only for the zero-temperature version presented here. However, rigorous results available at positive temperature 
(especially concerning the upper-bound) are still very far 
from it. We refer to \cite{LMST} for the best available result in dimension $2$ and to its bibliography for a
full account on what is known at positive temperature. Getting polynomial upper-bound (in $L$)on the mixing-time in dimension $2$ and even
sub-exponential bounds in dimension $3$ are challenging open problems.
Note also that our result can be extended to some cases where the temperature tends to zero as the size of the system $L$ goes to infinity, 
as in \cite{CMST}.

\section{Model and result}

\subsection{Definition of the model}\label{pbrsut}

Given a finite subset $\gG\subset \bbZ^d$ we study a continuous time Markov chain on the state space
\begin{equation}
 \gO_{\gG}:=\{-1,+1\}^{\gG}.
\end{equation}
This Markov chain can be seen as heat-bath Glauber dynamics for the Ising model at zero temperature.
We write $\sigma=(\sigma_x)_{x\in \gG}$ for a generic element of $\gO_{\gG}$.
We refer to $\sigma_x=\pm 1$ as \textit{spins}, and will often write simply $+$ or $-$. 
Let 
\begin{equation}
 \partial \gG:=\{y\in \bbZ^d \setminus \gG \ | \ \exists x \in \gG, x \sim y\}
\end{equation}
denote the boundary (in $\bbZ^d$) of $\gG$ and fix
some $\eta \in \{+1,-1\}^{\partial \gG}$ (referred to as boundary condition).
We define also the internal boundary of a set to be
\begin{equation}
  \partial^- \gG:=\{y\in \gG  \ | \ \exists x \in \Z^d \setminus\gG, x \sim y\}.
\end{equation}

Given $\xi \in \gO_{\gG}$, we consider the dynamics    $\left(\sigma^{\xi}(t)\right)_{t\ge 0}$ starting 
from initial condition $\sigma^{\xi}(0)=\xi$ (one may also write $\sigma^{\xi,\eta}(t)$ when one wants to underline dependence with 
respect to the boundary condition) and with the following evolution rules:
sites $x\in \gG$ are equipped with independent Poisson clocks with rate $1$, and the spin at $x$ may change its 
value only when the clock at $x$ rings. If the clock at $x$ rings at time $t_0$,
then one looks at the spins of the $2d$ neighbors of $x$  (some of which may lie in $\partial \gG$, in that case their spin is given by  
$\eta$) just before $t_0$, and then
\begin{itemize}
 \item if a strict majority of neighboring spins are $+$ then set $\sigma_{x}(t_0)=+$,
 \item if a strict majority of neighboring spins are $-$ then set $\sigma_{x}(t_0)=-$,
 \item if there are exactly $d$ pluses and $d$ minuses in the neighborhood of $x$ then set $\sigma_{t_0}(x)=+$ or $-$ with probability 
$1/2$ independently of the rest of the process.
\end{itemize}

\begin{rem}\rm \label{constru}
 One gets a natural construction of the dynamics by equipping every $x\in \gG$ with independent Poisson 
clock processes $(\tau_{n,x})_{n\ge 0}$ 
of rate $1$  and with independent sequences $(X_{n,x})_{n\ge 0}$ of  i.i.d. Bernoulli variables of parameter $1/2$ 
(taking values in $\{-1,+1\}$).
Then the evolution is set as follows:

At time $\tau_{n,x}$, if a strict majority of the neighbors of $x$ have spin $+$ (resp.\ $-$) one sets $\sigma_x=+$, (resp.\ $-$), 
otherwise one sets $\sigma_x(\tau_{n,x})=X_{n,x}$.

This construction (sometimes referred to as the \textit{graphical construction} in the literature) is used in the proof, 
and it is very useful to provide coupling for stochastic comparisons.

\end{rem}

More formally, one can write explicitly the transition rates $p$ for our Markov chain as follows.
Given $\sigma\in \gO_{\gG}$ and $x\in \gG$ let $\sigma^{x,+}$,(resp. $\sigma^{x,-}$) 
denote the configuration which is identical to $\sigma$ on 
$\gO\setminus \{x\}$ and where $\sigma^{x,+}_x=+$ (resp. $\sigma^{x,-}_x=-$). Then
\begin{equation}
 p(\sigma^{x,+},\sigma^{x,-})=\left\{
    \begin{array}{lll}
       1   & \text{ if }  \sum_{y\sim x,\ y\in \gG} \sigma_y + \sum_{y\sim x,\ y\in \partial\gG} \eta_y<0,\\  
       1/2 & \text{ if }  \sum_{y\sim x,\ y\in \gG} \sigma_y + \sum_{y\sim x,\ y\in \partial\gG} \eta_y=0,\\
       0   & \text{ if }  \sum_{y\sim x,\ y\in \gG} \sigma_y + \sum_{y\sim x,\ y\in \partial\gG} \eta_y>0.
    \end{array}\right.
\end{equation}
\begin{equation}
 p(\sigma^{x,-},\sigma^{x,+})=\left\{
    \begin{array}{lll}
       0   & \text{ if }  \sum_{y\sim x,\ y\in \gG} \sigma_y + \sum_{y\sim x,\ y\in \partial\gG} \eta_y<0,\\  
       1/2 & \text{ if }  \sum_{y\sim x,\ y\in \gG} \sigma_y + \sum_{y\sim x,\ y\in \partial\gG} \eta_y=0,\\
       1   & \text{ if }  \sum_{y\sim x,\ y\in \gG} \sigma_y + \sum_{y\sim x,\ y\in \partial\gG} \eta_y>0.
    \end{array}\right.
\end{equation}
All other transitions have rate zero.  We denote by $\bP$ 
the probability associated with this Markov chain
(although boundary condition and domain considered may vary) and by $\bE$ the corresponding expectation.
We are interested more specifically in the case where $\gG=\gG_L:= \{1,\dots, L\}^d$ is the $d-$dimensional hypercube, 
and $\eta$ is uniformly 
$+$ on $\partial \gG_L$. 
In that case, the dynamics possesses a unique absorbing state: the configuration with $+$ spins everywhere on $\gG_L$ 
(for simplicity we write it $+$).
We are interested in estimating the hitting time of this absorbing state starting from the all $-$ configuration

\begin{equation}
\mathcal \cT_{+}:= \inf\{t\ge 0 \ | \ \sigma^{-}(t)= +\}.
\end{equation}
In fact our main result is an upper bound on the mixing time defined here as
\begin{equation}
\Tm:=  \inf\{t\ge 0 \ |\ \bP[\mathcal T_+>t]\le 1/4\}.
\end{equation}

It has been conjectured that the growth of $\Tm$ should be given by Lifshitz's law: $\Tm$ should be of order $L^2$  
(see \cite{Lif} and \cite{FH} where a more general formulation of this conjecture can be found).
This result has already been proved in dimension $2$ (exactly) and $3$ 
(with non-matching logarithmic corrections for the lower and upper bound) 
by Fontes \textit{et al.} \cite{FSS} (exact upper bound in dimension $2$) by Caputo \textit{et al.} 
\cite{CMST} (lower bound in dimension $2$ and result in dimension $3$). In 
\cite{LST}, the first order asymptotic of $\Tm$ ($\Tm=L^2/2(1+o(1))$) is given for $d=2$.
We refer to \cite{CMST} and its bibliography for a more complete 
introduction to this problem and underlying issues.
The aim of this note is to extend this result by showing that there exists some constant $c_0>0$
independent of the dimension $d$ such that
\begin{equation}
\Tm=O(L^2 (\log L)^{c_0}).
\end{equation}
Previously in \cite[Theorem 1.3]{FSS}, the bound $\Tm= O(L^d)$ had been derived from an iterative procedure using as a starting point 
the result for 
$d=2$. The same kind of procedure starting from the result of \cite{CMST} would give $\Tm= 0(L^{d-1}(\log L)^c)$
which can be
considered as the best bound that had been known so far (see Proposition 
\ref{cuicuicui} for the statement and its proof for $d=4$).

\medskip

Our proof strongly relies on estimates that have been obtained for the three-dimensional model \cite{CMST}.
Sadly enough, we cannot yet give a lower bound on $\Tm$ with the same polynomial order. The reason for this
is that methods relying on exact computation for $d=2$ and $3$ developed in a paper of Wilson \cite{Wilson} and used in \cite{CMST}
fail to work in dimension larger $4$.

\medskip

We state now our main result in full detail.

\begin{theorem}\label{mainres}
Consider the dynamics in $\gG_L:=\{1,\dots,L\}^d$, with boundary condition $+$ on $\partial \gG_L$. 
There exists $c_0$ such that for all $d\ge 4$,  the following holds for every $k >0$. Uniformly for all initial configuration $\xi$, 
\begin{equation}\label{leres}
\bP\left[ \exists x\in \gG_L,\ \sigma_x^{\xi}(L^2(\log L)^{c_0})=-  \right]= O(L^{-k}).
\end{equation}
\end{theorem}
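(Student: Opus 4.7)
\medskip

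The plan is to reduce the $d$-dimensional problem to the known three-dimensional result of \cite{CMST} via an explicit slicing construction, crucially exploiting that three-dimensional Ising interfaces are almost flat (only logarithmic fluctuations). By monotonicity of the zero-temperature Glauber dynamics, which follows from the graphical construction of Remark~\ref{constru}, the hitting time $\cT_+$ starting from any initial condition $\xi$ is stochastically dominated by the hitting time starting from the all-$-$ configuration, so it suffices to treat $\xi\equiv -$.

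I would decompose $\gG_L$ into three-dimensional slabs indexed by the last $d-3$ coordinates: for $y\in\{1,\dots,L\}^{d-3}$, let $S_y:=\{1,\dots,L\}^3\times\{y\}$. Each $S_y$ is a 3D cube whose six two-dimensional faces are adjacent in $\Z^d$ to $\partial\gG_L$, and thus see $+$ boundary condition coming from the slab directions. If one could pretend that the dynamics on each slab were a pure three-dimensional Glauber with $+$ boundary condition, then the main result of \cite{CMST} would yield $\cT_+^{S_y}=O(L^2(\log L)^{c_3})$ with high probability, and a union bound over the $L^{d-3}$ slabs (tuning $k$ appropriately) would finish the proof.

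The real obstacle is that a spin $x\in S_y$ has $2(d-3)$ extra neighbors in $\Z^d$ sitting in adjacent slabs $S_{y\pm e_i}$, which couples the slab dynamics. Starting from all $-$, those extra neighbors initially contribute $-$, so the slab evolution is not directly dominated by a pure $+$-BC 3D Glauber. To overcome this, I would first treat the slabs closest to $\partial\gG_L$ in the extra directions, where some of the extra neighbors come from the $+$ external boundary; such slabs are dominated by a 3D $+$-BC Glauber and fill in time $O(L^2(\log L)^{c_3})$ by \cite{CMST}. Then, invoking the sharp equilibrium fluctuation bounds for 3D monotone interfaces (Proposition~4 of \cite{CMST}), I would set up an iterative coupling: once an outer band of slabs is all $+$, the $+$-region in the next band is close to its 3D counterpart up to an $O(\log L)$ discrepancy, allowing the all-$+$ collar to be extended inward with only a logarithmic cost at each step.

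The hard part is arranging the iteration so the total time remains $O(L^2(\log L)^{c_0})$: a naive slab-by-slab approach accumulates $O(L)$ phases and yields only $O(L^3)$, which is what one already obtains from the iterative scheme of \cite{FSS} starting from the 3D bound. The non-trivial construction must process many slabs in parallel, or multi-scale doubling of the thickness of the $+$-collar, so that only $O(\log L)$ effective phases occur and the total time is dominated by the last phase of length $O(L^2(\log L)^{c_3})$. In either case the 3D interface flatness is essential, since it controls the discrepancy between the true BC seen by a slab and the idealized $+$-BC used in the comparison, uniformly in $d$. This also explains why $c_0$ can be chosen independent of $d$: the dimension enters only through combinatorial bookkeeping absorbed in polynomial-log factors, while the critical ``speed'' of the dynamics is entirely inherited from the 3D result.
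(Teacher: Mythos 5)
Your overall philosophy (monotonicity reduction to the all-$-$ start, slicing into three-dimensional pieces, importing the estimates of \cite{CMST}, union bounds over slices) matches the paper, and you correctly diagnose that a slab-by-slab scheme only yields $O(L^3)$ up to logs (this is exactly Proposition \ref{cuicuicui}). But the step you leave open --- ``process many slabs in parallel, or multi-scale doubling of the $+$-collar, so that only $O(\log L)$ effective phases occur'' --- is precisely the missing idea, and as stated it cannot work in your geometry. With the decomposition of $\gG_L$ into slabs $\{1,\dots,L\}^3\times\{y\}$, an interior slab whose neighbouring slabs in the extra directions are still all $-$ makes no progress at all: each of its spins carries $2(d-3)$ frozen $-$ votes, and even a corner site of the slab receives only $3$ guaranteed $+$ votes from $\partial\gG_L$, which is short of the required strict majority $d+1$ of the $2d$ neighbours for every $d\ge 4$. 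Hence the all-$-$ slab is stable, the slabs are forced to clear strictly sequentially from the outside in, there is no parallelism or doubling to exploit, and the number of phases genuinely remains of order $L$.

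The paper's resolution is not to reduce the number of phases but to make each phase cost only $O(L\,\mathrm{polylog}(L))$ instead of $O(L^2\,\mathrm{polylog}(L))$. It works in a cylinder $S^3_{8L(\log L)^{c_2}}\times\{1,\dots,L\}$ containing a translate of $\gG_L$, and introduces a decreasing family of ``staircases'' $\cC_L^{(i)}$ of three-dimensional balls whose radii shrink by $(\log L)^{c_2}$ per unit of $i$ and are offset by two units between consecutive slices. This offset, together with an alternating $+/-$ boundary condition in the fourth direction, makes the influences of $x\pm e_4$ cancel, so each slice of $\cC_L^{(i)}\setminus\cC_L^{(i+2)}$ evolves as an independent, genuinely three-dimensional dynamics on a spherical annulus. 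The 3D input is then not the mixing time of the cube but a finer interface-speed statement (Proposition \ref{uncapitano}, a variant of Proposition 3 of \cite{CMST}): the outermost lattice layer of a ball of radius $r$ turns $+$ within time $\frac{1}{16}r(\log r)^{c_1}$, even with $-$ spins frozen at depth $(\log r)^{c_2}/2$. Peeling a shell of thickness $(\log L)^{c_2}$ from an annulus of radius $O(L(\log L)^{c_2})$ then costs one phase of length $L(\log L)^{c_0}$, and $4L$ phases give the total time $4L^2(\log L)^{c_0}$. Note also that the role of the logarithmic interface fluctuations is not, as you suggest, to control a discrepancy between the true and an idealized boundary condition, but to calibrate the staircase step: the step height must exceed the typical fluctuation so that freezing the $-$ spins at that depth does not slow the layer-peeling down. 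Without the speed estimate and the staircase geometry your sketch does not close.
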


We stress once again that the constant $c_0$ in the result does not depend on the dimension.
In fact, examining the proof in \cite{CMST} a (non-optimal) choice can be to choose any $c_0> 19/2$. 
In what follows, we fix $c_2:=3/2$, $c_1>13/2$ and $c_0:=c_1+ 2c_2+\gep$ (for some small $\gep$).

\medskip

\begin{rem}[about the lower-bound]\rm
The best lower-bound available for $\mathcal T_+$ in dimension $d\ge 4$ is rather trivial:  
there exists a constant $c_d$ such that, starting from the all $-$ configuration,
\begin{equation}
 \lim_{L\to \infty}\bP\left[\mathcal T_+\ge c_d L\right]=1.
\end{equation}
To obtain such a bound, it is sufficient to consider that when $\mathcal T_+$ occurs, the center of the cube has turned from $-$ to $+$ so that
there has been a 
sequence of consecutive updates on a paths of nearest neighbor sites, starting from the boundary of the cube and
ending in its center. 
Any improvement on this bound would be of interest.
\end{rem}

\section{Preparatory work}\label{prepw}

\subsection{Monotonicity of the dynamics}

We introduce a partial order on $\gO_{\gG}$ (resp.\ $\gO_{\partial \gG}$), saying that $\sigma\le \sigma'$ if $\sigma_x\le \sigma'_x$ for every $x\in \gG$
(resp.\ $x\in\partial \gG$).
Our dynamics enjoys nice monotonicity properties with respect to this order.
Let $\mu_{t,\eta}^{\xi}$ denote the law of $\sigma^{\xi}(t)$ with boundary condition $\eta$.
Then if $\eta \le \eta'$ and $\xi\le \xi'$ 
\begin{equation}
\mu_{t,\eta}^{\xi}\preceq \mu_{t,\eta'}^{\xi'},
\end{equation}
where $\preceq$ denotes stochastic domination (i.e.\ $\mu\preceq \nu$ if for any increasing function 
$f:\gO_{\gG}\rightarrow \R$,
$\mu(f)\le \nu(f)$, a function $f$ being increasing if $f(\sigma)\le f(\sigma')$ for all $\sigma\le \sigma'$).
This can be proved by constructing a coupling of the two processes $\sigma^{\xi,\eta}(t)$ and $\sigma^{\xi',\eta'}(t)$, 
building the two processes as explained in Remark \ref{constru}, with the  same realization of $\tau_{n,x}$ and $X_{n,x}$ for both.
With this coupling one has that for every $t\ge 0$
\begin{equation}\label{domin}
  \sigma^{\xi,\eta}(t)\le \sigma^{\xi',\eta'}(t).
\end{equation}
 
\medskip
 
This result remains valid with boundary conditions that vary through time:
if $\eta(\cdot)$ and $\eta'(\cdot)$ are (deterministic or random) functions from $\bbR_+$ to $\gO_\gG$ such that 
$\eta(s)\le \eta'(s)$ for all $s\in [0,t]$, one can still construct the Markov chains $\sigma^{\xi,\eta(\cdot)}(t)$, $\sigma^{\xi,\eta(\cdot)}(t)$  
(in this case the transition rates are time dependent), and couple them in a way that \eqref{domin} holds.

\medskip

A simple consequence of this monotonicity property is that to prove our main theorem, 
we just need to prove it with initial configuration all $-$. This monotonicity of the dynamics is crucial for our proof as it allows us to 
use an explicit strategy to control the evolution of the set of $-$ (see Proposition \ref{maintool} below).

\subsection{Main line of the proof}

From now on, we focus on the case $d=4$.
We show at the end of the paper how to adapt the proof for arbitrary dimension $d$.

\medskip

The first tool of the proof is to perform surgical changes on the original dynamics that cancels some of 
the spin-flips 
from $-$ to $+$. Due to the above mentioned monotonicity properties, this slows down the dynamics and makes the mixing time longer 
(which is alright to get an upper-bound), but if 
these changes are done carefully, this can also guide the dynamic in a nice pattern which allows 
us to have a better control on it.

\medskip

We choose carefully our modifications so that the dynamics we obtained is mostly a product of three-dimensional dynamics.
We consider our four-dimensional cube of side-length $L$ (we will rather consider a cylinder in what follows), as
a superposition of $L$ three dimensional \textit{slices}, and use our update blocking strategy to make the evolution of 
these slices independent. Then we use some ingredients borrowed from \cite{CMST} to control the evolution of each slice.
 In order to do this without slowing down the dynamics too much one needs an \textit{ad-hoc} geometrical construction 
that we present at the end of this section. In order to make the reader more aware of this need, 
one exposes first a naive and non-optimal strategy that 
yields consequently a non-optimal bound.

\subsection{A non-optimal strategy}

A simple way to control the dynamics in to make the successive three-dimensional layers of our four-dimensional cube evolve one after 
the other. This gives the following result,

\begin{proposition}\label{cuicuicui}
When $d=4$, there exists a constant $c$ such that for every $k\ge 0$, 
\begin{equation}
 \bP[\mathcal T_+> L^3 (\log L)^c]=O(L^{-k})
\end{equation}

\end{proposition}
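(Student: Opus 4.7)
The plan is a layer-by-layer induction that reduces the four-dimensional problem to the three-dimensional dynamics of \cite{CMST}. Decompose $\gG_L=\bigcup_{i=1}^L S_i$ with $S_i:=\{1,\dots,L\}^3\times\{i\}$, and set $T:=L^2(\log L)^{c_1}$, so that \cite{CMST} provides the 3D bound $\bP[\cT_+^{(\mathrm{3D})}>T]=O(L^{-k})$ for every $k$. Introduce
\[
  B_i:=\{\sigma^-_x(iT)=+\ \text{for every}\ x\in S_1\cup\dots\cup S_i\},\qquad i=0,\dots,L,
\]
and prove by induction on $i$ that $\bP[B_i^c]\le i\cdot O(L^{-k})$. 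Since $B_L=\{\sigma^-(LT)=+\}$ and $LT=L^3(\log L)^{c_1}$, picking $k$ large enough yields the statement with $c=c_1$.

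For the inductive step I would use the Markov property at time $(i-1)T$: conditionally on $B_{i-1}$, the configuration $\sigma^-((i-1)T)$ dominates the minimal configuration $\xi^{(i)}$ equal to $+$ on $S_1\cup\dots\cup S_{i-1}$ and to $-$ elsewhere, so by \eqref{domin} it is enough to control the failure probability starting from $\xi^{(i)}$. I would then slow this restarted chain further by blocking every $-\to+$ update inside $S_{i+1}\cup\dots\cup S_L$. In the graphical construction of Remark~\ref{constru} with shared Poisson clocks and shared Bernoulli variables, the resulting chain $\tilde\sigma$ satisfies $\tilde\sigma(t)\le\sigma^{\xi^{(i)}}(t)$ for every $t$, while the slices $S_{i+1},\dots,S_L$ remain stuck at $-$ throughout.

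The decisive observation is that under $\tilde\sigma$ the evolution on $S_i$ is \emph{exactly} the three-dimensional zero-temperature Glauber dynamics on $\{1,\dots,L\}^3$ with $+$ boundary condition, started from all $-$. Each $x\in S_i$ has eight neighbors: six in the horizontal $x_1,x_2,x_3$ directions (inside $S_i$, or on $\partial\gG_L$ where $\eta\equiv+$) and two in the $x_4$ direction. For $1\le i<L$ these latter contribute one $+$ (from $S_{i-1}$) and one $-$ (from the frozen $S_{i+1}$), so they cancel in the local field and the 4D majority rule at $x$ reduces exactly to the standard 3D one, tie case included. The edge case $i=L$ only adds a $+$-bias (both $x_4$-neighbors are $+$ from $\eta$), which by monotonicity can only accelerate the convergence to $+$. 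To justify that $S_{i-1}$ really stays all $+$ under $\tilde\sigma$, observe that any $+$ site in some $S_j$ with $j<i$ has at most one possibly non-$+$ neighbor (the one in $S_{j+1}$), hence sees at least seven pluses among its eight neighbors and the majority rule never permits a flip to $-$. Invoking the CMST bound for $S_i$ then gives $\bP[B_i^c\mid B_{i-1}]=O(L^{-k})$, and a union bound over $i\le L$ closes the induction.

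The most delicate point is making the two monotonicity steps rigorous: the domination $\tilde\sigma\le\sigma^{\xi^{(i)}}$ is routine once one checks that suppressing a single $-\to+$ update in the shared-clock coupling preserves the partial order (same argument as the one leading to \eqref{domin}), and the reduction of the effective rule on each $S_i$ to the standard 3D rule is the $x_4$-cancellation described above. Everything else amounts to a Markov splitting, a union bound over $i\le L$, and the invocation of \cite{CMST} as a black box.
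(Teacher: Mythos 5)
Your proposal is correct and follows essentially the same route as the paper: freeze the layers above so that each slice $\{1,\dots,L\}^3\times\{i\}$ evolves, one after the other, exactly as the three-dimensional dynamics (the two $e_4$-neighbours contributing a cancelling $+$ and $-$), then invoke the $O(L^{-k})$ bound of \cite{CMST} and a union bound over the $L$ layers. The only cosmetic differences are that the paper builds a single modified dynamics that unblocks layer $i$ at time $(i-1)L^2(\log L)^c$ rather than restarting via the Markov property at each step, and that you treat the top layer $i=L$ (where both $e_4$-neighbours are $+$) slightly more explicitly.
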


\begin{proof}
One consider $\tilde \sigma$ a modified version of the dynamics $\sigma^-$ (starting from all $-$),
that blocks the update of spins in the $i$-th layer $\{1,\dots, L\}^3\times \{i\}$ up to time 
$(i-1) L^2 (\log L)^c$. By monotonicity $\tilde \sigma$ is dominated by $\sigma$ and thus it is sufficient 
to prove the proposition
with $\sigma$ replaced by $\tilde \sigma$.
\medskip

We prove by induction on $i=1,\dots,L$ that for $L$ large enough,
\begin{equation}\label{dertu}
 \bP\left[\exists x\in \{1,\dots, L\}^3\times \{1,\dots, i\}, \tilde\sigma_x (iL^2(\log L)^c)=-\right]\le iL^{-k-1}.
\end{equation}
For $i=1$, notice that only the sites  in the first layer, $\{1,\dots, L\}^3\times \{1\}$ can be updated 
during the time window 
$[0,L^2 (\log L)^c]$. For each $x$ in this layer, spins of $x+e_4$ and $x-e_4$ (where $e_4:=(0,0,0,1)$) are respectively 
$-$ and $+$ so 
that neighbors along the fourth dimension have no influence on the majority rule. As a result the dynamics in 
$\{1,\dots, L\}^3\times \{1\}$ is in bijection with the three-dimensional dynamics on a cube 
with $+$ boundary condition and by \cite[Theorem 1]{CMST} (strictly speaking the result says 
that the probability that a $-$ spin remains in the 
cube after time $L^2 (\log L)^c$ is $O(L^{-1})$ but from the proof it can be shown that 
it is $O(L^{-k})$ for all $k$),
\eqref{dertu} holds with $d=1$.

\medskip

For the induction step, notice that if all spins of $\{1,\dots, L\}^3\times \{1,\dots, i\}$ are equal to $+$ at a given 
time, they stay $+$ forever (they will alway keep a strict majority $+$ neighbors).

Thus if $\tilde\sigma(iL^2(\log L)^c)=+$ for all $x\in \{1,\dots, L\}^3\times \{1,\dots, i\}$, then during the time window 
$[iL^2 (\log L)^c,(i+1)L^2 (\log L)^c]$, only the spins in the layer  $\{1,\dots, L\}^3\times \{ i+1\}$, may change sign
(those above are blocked by definition). Thus for the same reason as above, the dynamic in this layer
is in bijection with the three-dimensional dynamics,
and thus at time $(i+1)L^2 (\log L)^c$ if $L$ is large 
(using the Markov property for the dynamics and \cite[Theorem 1]{CMST})
all the spins have flipped to $+$ with probability at least $L^{-k-1}$.
 
\begin{multline}
 \bP\left[\exists x\in \{1,\dots, L\}^3\times \{1,\dots, i+1\},\ 
\tilde\sigma_x ((i+1)L^2(\log L)^c)=-\right]\\
\le 
 \bP\left[\exists x\in \{1,\dots, L\}^3\times \{1,\dots, i\},\
\tilde\sigma_x (iL^2(\log L)^c)=-\right]+ L^{-k-1}.
\end{multline}
which combined with the induction hypothesis, ends the proof.
\end{proof}

\subsection{The geometrical construction}

The main reason why the strategy of the previous section does not give a sharp bound is that
 the evolution of the modified dynamics is very far from the predicted motion by mean curvature. The whole challenge is thus
to make the dynamics evolves in a way that is not too far from motion by mean curvature, but that we can still control.
This is the object of the following construction:

For any $r\ge 0$ let $S^3_r$ denote the three-dimensional discrete ball with radius $r$ ($r$ need not be an integer)

\begin{equation}
 S^3_r:=\left\{z=(z_1,z_2,z_3)\in \Z^3 \ \big| \ z_1^2+z_2^2+z_3^2\le r^2\right\}.
\end{equation}

We consider dynamics in the cylinder  
\begin{equation}
 \mathcal C_L:=  S^3_{8 L (\log L)^{c_2}} \times \{1,\dots,L\},
\end{equation}
with $+$ boundary condition at the bottom of the cylinder and around it and $-$ on the top of it, or formally:
\begin{itemize}
 \item $\eta_x=+$ for $x\in \partial S^3_{8 L (\log L)^{c_2}}\times \{1,\dots L\}$, and $S_L^3\times \{0\}$ where $\partial$ denotes here the boundary 
in $\Z^3$,
 \item $\eta_x=-$ for $S^3_{8 L (\log L)^{c_2}}\times \{L+1\}$.
\end{itemize}
We call this boundary condition $\eta_0$. We put $-$ spins at the top of the cylinder only because it turns out to be handy in the proof
 but this is not crucial point.

\medskip

Note that from this choice of boundary condition and from the way we prove of Theorem \ref{mainres} (see below), 
that the conclusion of Theorem \ref{mainres} 
still holds also if the boundary condition is not all $+$ on the boundary of the cube, 
but is $-$ one one side and $+$ on all the others.

\medskip

In most cases, we do not underline dependence in the boundary condition and write  $(\sigma^{\xi}_x(t))_{x\in \mathcal C_L}$ 
for the spin configuration at time $t$ starting from configuration $\xi$ with $\eta_0$ boundary condition.

For every $i,k\in \N$ one defines the sets $\mathcal S_{L,k}$ and $\mathcal C_L^{(i)}$ as follows :
\begin{equation}\label{sll}\begin{split}
\mathcal S_{L,k} &:=  S^3_{(8 L-k)(\log L)^{c_2}},\\
 \mathcal C_L^{(i)} & :=\mathcal C_L\cap \left\{z=(z_1,z_2,z_3,z_4)\in \Z^4 \ \big| 
\ (z_1,z_2,z_3)\in \mathcal S_{L,(i-2z_4+2)_+}\right\},
\end{split}\end{equation}
where $x_+=\max(x,0)$ denotes the positive part of $x$. Note that $\mathcal C_L^{(0)}=\mathcal C_L$.
The set $\mathcal C_L^{(i)}$ is a pile (along the fourth dimension) of $L$ three-dimensional balls of increasing radius (see Figure \ref{yty}).
For different subsets $\mathcal D \subset \mathcal C_L$ we may consider
\begin{equation}
 \mathcal D \cap \left(\Z^3\times \{j\}\right)
\end{equation}
the $j$-th \textit{slice} of $\mathcal D$. 
Our main result is a consequence of the following proposition, that controls the evolution of the set of minus in
 $\sigma^-(t)$.
We prove it by considering an auxiliary dynamics that roughly blocks the spins in $\mathcal C_L^{(i)}$ to $-$ up to time
 $(i-1)L (\log L)^{c_0}$ (in fact one needs to perform several coupling but this is the main idea),
\begin{proposition} \label{maintool}
 For every $k>0$, for every $i\in \{1,\dots,4L\}$ one has
\begin{equation}
 \bP\left[\exists t\in \left[iL(\log L)^{c_0}, L^3\right],\ 
\exists x\in \mathcal C_L\setminus  \mathcal C_L^{(i)},\ \sigma_x^-(t)=-\right]=O(L^{-k}).
\end{equation}
\end{proposition}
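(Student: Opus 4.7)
The plan is to argue by induction on $i \in \{0, 1, \dots, 4L\}$, the inductive hypothesis at step $i$ being exactly the statement of the proposition at that index. The base case $i = 0$ is trivial since $\mathcal C_L \setminus \mathcal C_L^{(0)} = \emptyset$. For the inductive step I would introduce an auxiliary dynamics $\tilde\sigma^{(i)}$ built from the same graphical construction as $\sigma^-$, in which every clock ring at a site of $\mathcal C_L^{(i)}$ occurring during the window $[0, (i-1)L(\log L)^{c_0}]$ is suppressed. Since the initial configuration is all $-$, this blocking freezes the spins of $\mathcal C_L^{(i)}$ at $-$ throughout the window, and by monotonicity of the graphical coupling one obtains $\tilde\sigma^{(i)}(t) \leq \sigma^-(t)$ in the partial order for every $t$; it then suffices to show that $\tilde\sigma^{(i)}$ has no $-$ spins outside $\mathcal C_L^{(i)}$ throughout the interval $[iL(\log L)^{c_0}, L^3]$.

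The analysis of $\tilde\sigma^{(i)}$ on $\mathcal C_L \setminus \mathcal C_L^{(i)}$ becomes essentially three-dimensional. The frozen region supplies $-$ boundary conditions on $\partial \mathcal C_L^{(i)}$, while the outer cylinder provides $+$ boundary conditions on $\partial \mathcal C_L$. Slicing along the fourth coordinate, the complement restricted to $\{z_4 = j\}$ is a spherical shell of thickness $(i - 2j + 2)_+ (\log L)^{c_2}$; the staircase geometry of the $\mathcal C_L^{(i)}$ was designed precisely so that the radius of each such shell decreases by $(\log L)^{c_2}$ per unit of $i$, a rate compatible with (and slower than) motion by mean curvature for a ball of radius of order $L(\log L)^{c_2}$. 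Applying the two key tools from \cite{CMST} --- the mixing-time estimate and the equilibrium fluctuation bound for monotone three-dimensional interfaces --- one argues that, within each slice, the dynamics reaches equilibrium well within the allotted budget $L(\log L)^{c_0}$, and that the equilibrium interface between $+$ and $-$ stays within $O(\log L)$ of its deterministic mean position. Because $c_2 = 3/2$ makes $(\log L)^{c_2}$ much larger than $\log L$, the interface then lies safely inside $\mathcal C_L^{(i-1)} \setminus \mathcal C_L^{(i)}$, so the outer region $\mathcal C_L \setminus \mathcal C_L^{(i-1)}$, and a fortiori $\mathcal C_L \setminus \mathcal C_L^{(i)}$, stays entirely $+$. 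Extending this from the single time $t = iL(\log L)^{c_0}$ to all $t \in [iL(\log L)^{c_0}, L^3]$ is obtained by combining the inductive statements for all larger indices with a short monotonicity argument.

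The hardest part will be the coupling between adjacent slices through the fourth direction: nearby slices have shells of different thicknesses, and the majority rule in one slice depends on the spins of its two neighbours, so the dynamics does not literally decouple. As the parenthetical in the statement foreshadows, I would perform several successive couplings that, via monotonicity, replace the fourth-direction neighbours by constant $\pm$ boundary values, producing a stochastically dominated slice-by-slice purely 3D dynamics to which \cite{CMST} applies directly. A union bound over the $O(L)$ slices, the $O(L)$ induction steps in $i$, and a suitable inflation of $k$ then absorbs all the polynomial losses and yields the claimed $O(L^{-k})$ bound.
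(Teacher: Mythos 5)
Your overall strategy is the paper's: induction on $i$, an auxiliary dynamics obtained by suppressing spin flips, reduction to three\--dimensional slice dynamics, input from \cite{CMST}, and a union bound over slices and induction steps. But the step you yourself flag as the hardest --- decoupling the slices along the fourth direction --- is left as a genuine gap, and the fix you sketch does not work on the region you propose to analyze. If you freeze $\mathcal C_L^{(i)}$ at $-$ and try to empty all of $\mathcal C_L\setminus\mathcal C_L^{(i)}$, that region is up to $\lceil i/2\rceil$ slices thick in the $e_4$ direction, so a typical site there has \emph{both} $e_4$-neighbours inside the region. Replacing those neighbours by constants ``via monotonicity'' then forces the pessimistic value $-$ on both, and the resulting slice dynamics is not the three-dimensional majority dynamics but a 3D dynamics with an external field $-2$, to which the \cite{CMST} estimates do not apply. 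The reduction to three dimensions requires exactly one $+$ and one $-$ neighbour along $e_4$, so that their contributions to the majority rule cancel. The paper achieves this geometrically: at each step it analyzes not $\mathcal C_L\setminus\mathcal C_L^{(i)}$ but the thin set $\mathcal C_L^{(i)}\setminus\mathcal C_L^{(i+2)}$, which by construction of the staircase is exactly one slice thick ($x\pm e_4\notin \mathcal C_L^{(i)}\setminus\mathcal C_L^{(i+2)}$ for every $x$ in it, cf.\ \eqref{bdecop}); the site above can legitimately be set to $-$ (it is still blocked) and the site below to $+$ (it lies outside $\mathcal C_L^{(i)}$, hence is $+$ on the event $\{T_i>L^3\}$ and frozen $+$ by the blocking), and only then do the slices become independent 3D shell dynamics.

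Two further points. First, your induction is structurally circular as set up: your auxiliary dynamics $\tilde\sigma^{(i)}$ is dominated by $\sigma^-$, so the inductive hypothesis, which concerns $\sigma^-$, cannot be imported into the analysis of $\tilde\sigma^{(i)}$ (the domination goes the wrong way). The paper instead introduces the stopping time $T_i$, uses that the blocked dynamics coincides with $\sigma^-$ up to $T_i$, restricts to the event $\{T_i>L^3\}$, and applies the Markov property at time $iL(\log L)^{c_0}$ to restart from a configuration dominated by ``$+$ outside $\mathcal C_L^{(i)}$, $-$ inside''. Second, the 3D input is not an ``equilibrium plus fluctuation'' statement for monotone interfaces but a hitting-time estimate for the zero-temperature shell dynamics (Proposition \ref{uncapitano}, a variant of Proposition 3 of \cite{CMST}), which must itself be bootstrapped by a sub-induction (Corollary \ref{ndg}) to clear a layer of thickness $(\log L)^{c_2}$; that estimate is also what guarantees the cleared layer stays $+$ for \emph{all} $t$ in the window, a point your plan defers to an unspecified monotonicity argument (note that a region which is momentarily all $+$ does not automatically remain $+$ when it has $-$ neighbours).
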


\begin{rem}\rm
 In addition to an upper bound for the mixing time, the above result gives a control on how the set of $-$ spins reduces through time:
with high probability, at time $iL(\log L)^{c_0}$ and after, the set of minus spins is included in $\mathcal C_L^{(i)}$. However,
the real evolution of the shape of the set of $-$ should be very different from the one described in Figure \ref{yty}.
Recall that at a macroscopic level, it is believed that the local drift of the interface between $+$ and $-$ is proportional to local mean curvature
(this conjecture is the heuristic support for  Lifshitz's law).
\end{rem}

We show now how to get our main result from the former statement, which we prove in the next section.

\begin{proof}[Proof of Theorem \ref{mainres} from Proposition \ref{maintool} when $d=4$]

Applying Proposition \ref{maintool} for $i=4L$, we get that for any given $k$, for $L$ large enough
and for $t=t_L:=4L^2(\log L)^{c_0}$
 \begin{equation}
 \bP\left[
\exists x\in \mathcal C_L\setminus  (S^3_{6 L (\log L)^{c_2}} \times \{1,\dots,L\}),\ \sigma_x^-(t_L)=-\right]=O( L^{-k}).
\end{equation}

 \begin{figure}[hlt]
 \begin{center}
 \leavevmode 
 \epsfxsize =14 cm
 \psfragscanon
 \psfrag{4LL}{$4L(\log L)^{c_2}$}
 \psfrag{6LL}{$6L(\log L)^{c_2}$}
 \psfrag{8LL}{$8L(\log L)^{c_2}$}
 \psfrag{i=0}{$\mathcal C_L^{(0)}$}
 \psfrag{L}{L}
 \psfrag{i=L}{$\mathcal C_L^{(L)}$}
 \psfrag{i=2L}{$\mathcal C_L^{(2L)}$}
 \psfrag{i=4L}{$\mathcal C_L^{(4L)}$}
 \epsfbox{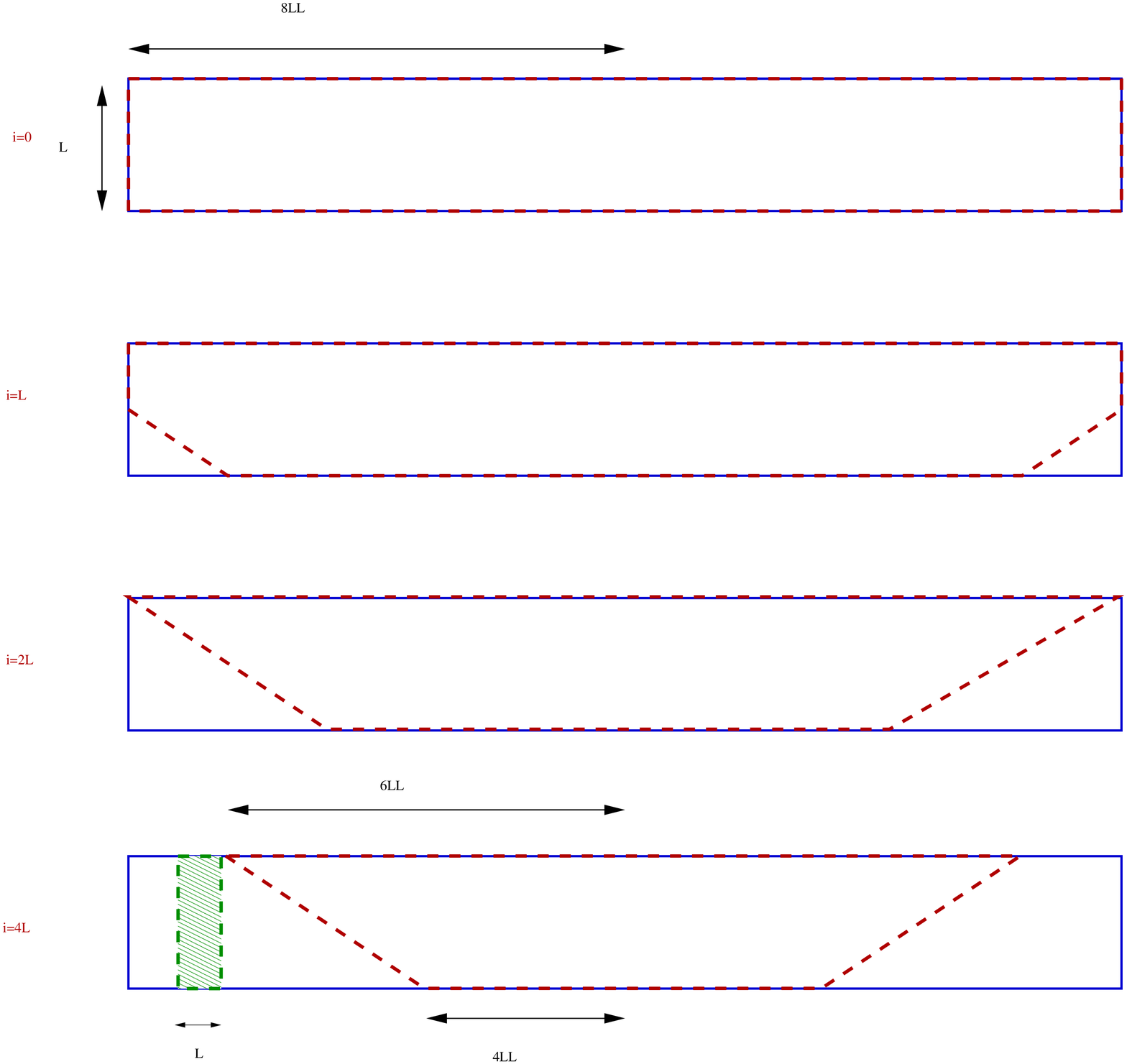}
 \end{center}
 \caption{\label{yty} Schematic two-dimensional representation of the sets $\mathcal C_L^{(i)}$ for four values of $i$. 
 The outside rectangle in full line is $\mathcal C_L$, the region within the dotted line is $\mathcal C_L^{(i)}$.
 The figure also shows how we place the hypercube $\tilde \gG_L$ (shadowed region) in 
 $\mathcal C_L^{(4L)}\setminus \mathcal C_L$ for the proof of the main Theorem. For graphical reason 
 the scaling in the different directions is not the same.}
 \end{figure}

In particular if $L$ is chosen large enough, at time $t_L$ there are no more minus spins in the cube
\begin{equation}
  \tilde\Gamma_L:= \{\lceil 6 L(\log L)^{c_2} \rceil+1,\dots,  \lceil 6 L(\log L)^{c_2}\rceil+L   \} \times \{1,\dots,L\}^3,
\end{equation}
with probability at least $1-O(L^{-k})$ since $\tilde\gG_L\subset (\mathcal C_L\setminus \mathcal C_L^{(4L)})$ ($\tilde \gG_L$ is a translation of $\gG_L$ from Theorem \ref{mainres}).
Moreover, by monotonicity in the boundary condition, the evolution of the spins in $\tilde\Gamma_L$ for the dynamics in $\mathcal C_L$ 
starting for all $-$ is stochastically dominated by the evolution
of $\tilde \Gamma_L$ started from all $-$ inside with $+$ boundary condition on $ \partial \tilde \Gamma_L$, 
and therefore \eqref{leres} holds.  
\end{proof}

\subsection{The special role of dimension three}

The proof of Proposition \ref{maintool} that we develop in the next section, 
relies heavily on the work done for the three-dimensional model.
Before starting it, we wish to discuss the particular role played by the three-dimensional model.

 In \cite{CMST}, the bound $\Tm=O(L^2 (\log L)^{c_0})$ was 
obtained by a careful analysis of the fluctuations at equilibrium and the mixing time for monotone surfaces.
A monotone surface is the graph of a function taking integer values, defined on a connected set $\gL\subset \bbZ^2$ 
and that is monotone with respect to
both coordinate in $\bbZ^2$ (this notion can be extended to any dimension). 

\medskip

The reason why 
dimension $3$ allows to get the result for higher dimension is that fluctuation of monotone surface have low amplitude 
(order $\log L$ for surfaces of side length $L$), and the reason is the following:
if we want the dynamic on the different slices of $\mathcal C_L$ not to interact between one another, one want to put the 
$+/-$ interfaces of successive slices at a distance strictly larger than the typical interface fluctuation. This is the 
reason for the particular construction of the sets $\mathcal C_L^{(i)}$.
This forces us to consider dynamics on a set $\mathcal C_L$ that has diameter $L(\log L)^{c_2}$ instead of $L$, but the 
consequence of this is just the loss of some extra $\log L$ factors in the result.

\medskip

In dimension $2$, fluctuations of monotone interfaces are too large (of order $\sqrt{L}$ for surfaces of side length $L$) 
to make a similar construction efficient  
(in particular one \textbf{could not} have proved the main result in \cite{CMST} for $d=3$ 
by using only estimates for the two dimensional model). In dimension $4$
and more, fluctuations  at equilibrium of monotone hyper-surfaces are believed to be smaller (\textit{i.e.} analogy with Gaussian Free Field, one predicts
 that the mean fluctuations are of order $1$) but
tools that we have in dimension $3$ (one-to-one mapping with dimer models, determinantal representation) 
are not available to compute them, and therefore the method of \cite{CMST} cannot be applied.

\section{Proof of Proposition \ref{maintool}}

Our proof of Proposition \ref{maintool}, which is the core of this note, uses repeatedly a variant of Proposition 3 
from \cite{CMST} that we state below. 
The result differs from the cited proposition as, although we also consider a dynamics in $S^3_r$ with $+$ boundary condition,
we put the extra condition that spins inside a sphere of radius $S^3_{r-l}$ (with $l\ge (\log r)^{c_2}/2$) are constrained to stay $-$.
Nevertheless, the proof in \cite{CMST} works without any change here.
Recall here that at the end of Section \ref{pbrsut} we fixed $c_2=3/2$, $c_1> 13/2$, and $c_0>c_1+2c_2$.

\begin{proposition}\label{uncapitano}
Given $r$ and $l\ge \frac{(\log r)^{c_2}}{2}$, consider the (three-dimensional) dynamics in $S^3_r \setminus S^3_{r-l}$ with $+$ 
boundary condition on $\partial S^3_r$ and $-$ on $\partial^-S^3_{r-l}$ (\textit{i.e.} $+$ on the outside, $-$
 on the inside).
Then for any $k>0$, uniformly in $l\ge \frac{(\log r)^{c_2}}{2}$,
\begin{equation}
 \bP\left[\exists t\in \left[\frac{1}{16}r(\log r)^{c_1},r^3\right],\ \exists x\in  S^3_r\setminus S^3_{r-1},\ \sigma^{-}_x(t)=- \ \right] = O(r^{-k}).
\end{equation}
\end{proposition}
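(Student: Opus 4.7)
The strategy is to replicate the proof of Proposition 3 of \cite{CMST} with only cosmetic modifications, exploiting the separation between the shell width $l \geq \frac{1}{2}(\log r)^{c_2}$ (with $c_2 = 3/2$) and the logarithmic equilibrium fluctuation scale of three-dimensional monotone interfaces. The additional frozen $-$ boundary on $\partial^- S^3_{r-l}$ plays essentially no role: the relevant interface dynamics is confined to a thin layer near $\partial S^3_r$ of width much smaller than $l$, and never ``sees'' the inner boundary.

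Concretely, I would proceed in three steps. First, transport the auxiliary monotone interface process from \cite{CMST} to the shell $S^3_r \setminus S^3_{r-l}$: this is, in a suitable local chart near $\partial S^3_r$, a monotone surface process which, via the graphical construction of Remark \ref{constru}, provides a lower bound for the $+$ region under the true shell dynamics. Second, invoke the equilibrium fluctuation bound for three-dimensional monotone interfaces (essentially Proposition 4 of \cite{CMST}) to show that with probability $1 - O(r^{-k})$ this auxiliary interface stays within a layer of width at most $\frac{1}{4}(\log r)^{c_2}$ from $\partial S^3_r$ uniformly on the time interval $[0, r^3]$. Third, invoke the mixing-time estimate for the auxiliary process from \cite{CMST} to conclude that once $t \geq \frac{1}{16} r (\log r)^{c_1}$, the auxiliary $+$ phase covers the outer shell $S^3_r \setminus S^3_{r-1}$ entirely, and by the coupling so does the true $+$ phase of $\sigma^-(t)$.

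The main technical check is that the CMST argument can indeed be localized to a layer of width $l$ near $\partial S^3_r$ without invoking any global property of the full ball. Given the author's explicit claim that ``the proof in \cite{CMST} works without any change here,'' and the calibration $l \geq \frac{1}{2}(\log r)^{c_2} \gg \log r$, this localization is automatic: the auxiliary interface never reaches $\partial^- S^3_{r-l}$ during the time window $[0, r^3]$, so the frozen $-$ spins inside $S^3_{r-l}$ are consistent with the auxiliary $-$ phase and hence do not interfere with the coupling. Note that naive monotonicity in the boundary condition runs the wrong way here (the extra $-$ on $\partial^- S^3_{r-l}$ makes the shell dynamics pointwise smaller than the full-ball dynamics), so one genuinely has to redo the CMST construction rather than cite it as a black box; but the redo is a mechanical transcription once one has identified that every estimate used in \cite{CMST} depends only on the behaviour of the interface in a $(\log r)^{c_2}$-neighbourhood of $\partial S^3_r$.

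The one delicate point I would verify carefully is the local chart: the outer boundary is spherical rather than flat, so the monotone-surface representation requires a local parametrization of $\partial S^3_r$. However, on any piece of $\partial S^3_r$ of diameter $\ll r$, the boundary has curvature $O(1/r)$, while the fluctuations are only $O(\log r)$, so on the relevant scale the boundary is effectively flat and the CMST monotone-interface arguments carry over verbatim; this is the only place where the geometry of the shell (as opposed to a slab) enters the proof.
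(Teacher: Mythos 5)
The paper offers no proof of this proposition beyond the one-line assertion that the proof of Proposition 3 in \cite{CMST} goes through unchanged, so your proposal is essentially an unpacking of the paper's own argument, and your two key observations are the right ones: monotonicity in the boundary condition runs the wrong way (the frozen inner $-$ makes the configuration pointwise smaller), and the CMST argument only probes a $(\log r)^{c_2}$-deep neighbourhood of $\partial S^3_r$, so a frozen ball at depth $l\ge\tfrac12(\log r)^{c_2}$ is never felt. The one shortcut you miss, which the paper records immediately after the statement, is that monotonicity in $l$ (enlarging $l$ shrinks the frozen region and hence raises the dynamics) reduces everything to $l=\tfrac12(\log r)^{c_2}$, which is exactly the slab depth at which the argument of \cite{CMST} already freezes spins to $-$ in order to confine its auxiliary monotone-surface dynamics — so for that value the statement is an intermediate step of the cited proof rather than a variant needing a fresh localization check, and your worries about the spherical chart and about the interface escaping to depth $l$ are resolved there (via the octant-wise monotone decomposition rather than local flattening); note also that the mixing-time estimate alone does not place the interface below depth $1$ at time $\tfrac1{16}r(\log r)^{c_1}$ — that requires the equilibrium fluctuation bound again, together with a union bound over the window $\left[\tfrac1{16}r(\log r)^{c_1},r^3\right]$.
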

 

Note that by monotonicity, the result has to be checked only for $l\ge (\log r)^{c_2}/2$.
From this proposition, we deduce the following result, that will be of practical use for us.

\begin{cor} \label{ndg}
 Consider the (three-dimensional) dynamics in $S^3_r \setminus S^3_{r-2(\log L)^{c_2}}$
 with  boundary condition $+$ on $\partial S^3_r$ and $-$ on $\partial^- S^3_{r-2(\log L)^{c_2}}$. Then for any $k>0$, 
for any $r\in [L(\log L)^{c_2},8L(\log L)^{c_2}]$,
\begin{equation}\label{tret}
\bP\left[\exists t\in \left[L(\log L)^{c_0},L^3\right],\ \exists x\in  S^3_r\setminus S^3_{r-(\log L)^{c_2}},\
 \sigma^{-}_x(t)=- \ \right] = O(L^{-k}).
\end{equation}
\end{cor}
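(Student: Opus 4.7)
The plan is to iterate Proposition~\ref{uncapitano} roughly $(\log L)^{c_2}$ times, peeling off one concentric shell of width $1$ per step. Set $J:=\lceil(\log L)^{c_2}\rceil$ and, for $j=0,\ldots,J$,
\[
t_j:=\tfrac{1}{16}(r-j)(\log(r-j))^{c_1},\qquad T_j:=\sum_{i=0}^{j-1}t_i,\qquad A^{(j)}:=S^3_{r-j+1}\setminus S^3_{r-2(\log L)^{c_2}}.
\]
Using $r\le 8L(\log L)^{c_2}$ and $c_0>c_1+2c_2$ one checks $T_J\le L(\log L)^{c_0}$ for $L$ large. I will prove by induction that
\[
E_j:=\bigl\{\sigma^-_x(t)=+\text{ for every }t\in[T_j,L^3]\text{ and every }x\in S^3_r\setminus S^3_{r-j}\bigr\}
\]
has probability $\ge 1-jCL^{-k-1}$ for an absolute constant $C$. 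Since $E_J$ implies the complement of the event in \eqref{tret}, and $J\cdot CL^{-k-1}=O(L^{-k})$, this gives the corollary.

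The base $j=0$ is vacuous. For the step, fix $j\ge 1$ and let $\tau^{(j)}$ denote the autonomous Glauber dynamics in $A^{(j)}$ with fixed boundary condition $+$ on $\partial S^3_{r-j+1}$ and $-$ on $\partial^-S^3_{r-2(\log L)^{c_2}}$, started from the all-$-$ configuration at time $T_{j-1}$ and coupled to $\sigma^-$ via the graphical construction (same Poisson clocks and coin-flips). On the event $E_{j-1}$ the spins of $\sigma^-$ in the already-cleared shell $S^3_r\setminus S^3_{r-j+1}$ are all $+$ throughout $[T_{j-1},L^3]$, so $\sigma^-|_{A^{(j)}}$ evolves on that interval as a Glauber dynamics in $A^{(j)}$ with precisely the boundary conditions of $\tau^{(j)}$. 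Monotonicity in the initial condition (using $\sigma^-(T_{j-1})\ge -$) then gives $\sigma^-(t)\ge\tau^{(j)}(t)$ on $A^{(j)}$ for all $t\ge T_{j-1}$, whenever $E_{j-1}$ holds.

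Now apply Proposition~\ref{uncapitano} to $\tau^{(j)}$ with outer radius $r-j+1$ and thickness $l_j:=2(\log L)^{c_2}-j+1$. Since $l_j\ge(\log L)^{c_2}$ while $\log(r-j+1)=(1+o(1))\log L$, the hypothesis $l_j\ge(\log(r-j+1))^{c_2}/2$ holds for $L$ large. Using exponent $k+1$ in Proposition~\ref{uncapitano} (and $r-j+1\ge L$), with probability at least $1-CL^{-k-1}$ the dynamics $\tau^{(j)}$ has no $-$ spin in $S^3_{r-j+1}\setminus S^3_{r-j}$ throughout $[T_{j-1}+t_{j-1},T_{j-1}+(r-j+1)^3]\supseteq[T_j,L^3]$ (the right-hand inclusion uses $(r-j+1)^3\ge L^3$). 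Call this event $F_j$. On $E_{j-1}\cap F_j$ the coupling transports the absence of $-$ from $\tau^{(j)}$ to $\sigma^-$, and together with $E_{j-1}$ itself this yields $E_j$. A union bound gives $\bP[E_j^c]\le\bP[E_{j-1}^c]+CL^{-k-1}$, completing the induction.

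The main subtlety I expect is the coupling argument: the event $E_{j-1}$ is exactly what is needed so that, within the time window of interest, the outer cleared layer plays the role of a static $+$ boundary for $\sigma^-$ restricted to $A^{(j)}$, which is what legitimises the monotone domination $\sigma^-\ge\tau^{(j)}$ and hence the direct reduction to Proposition~\ref{uncapitano} at each step.
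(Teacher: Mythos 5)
Your proof is correct and follows essentially the same strategy as the paper: iterate Proposition \ref{uncapitano} over $\lceil(\log L)^{c_2}\rceil$ concentric unit shells, each application costing time of order $L(\log L)^{c_1+c_2}$, and use monotonicity to restart the shrunken annulus from the all-$-$ configuration. The only difference is in the bookkeeping: the paper formalizes the restart via a censored auxiliary dynamics together with the Markov property at the deterministic times $iL(\log L)^{c_1+c_2}$, whereas you use a direct pathwise domination in the graphical construction on the good event $E_{j-1}$; both are valid.
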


\begin{proof}
Define
\begin{equation}
 \tau_i:=\inf\left\{t\ge iL (\log L)^{c_1+c_2} \ | \ \exists x \in  S^3_r\setminus S^3_{r-i}, \ \sigma^-_x(t)=-\right\},
\end{equation}
with the convention that $\inf \emptyset=\infty$.
Equation \eqref{tret} is implied by
\begin{equation}
 \bP(\tau_{\lceil(\log L)^{c_2}\rceil}\le L^3)=O(L^{-k}),
\end{equation}
because $c_0>c_1+2c_2$.

We prove by induction on $i$ that for all $i\in\{1,\dots,\lceil(\log L)^{c_2}\rceil\}$, for all large $L$,
\begin{equation}\label{troisinc}
\bP(\tau_i\le L^3)\le iL^{-(k+1)}.
\end{equation}
For $i=1$ one simply uses Proposition \ref{uncapitano} with $l=2(\log L)^{c_2}\ge \frac{(\log r)^{c_2}}{2}$.
For the induction step, one uses the induction hypothesis as follows
\begin{multline}\label{troisis}
 \bP(\tau_{i+1}\le L^3 )\le \bP(\tau_{i}\le L^3)+ \bP(\tau_{i+1}\le L^3\ ; \ \tau_i> L^3)\\
 \le iL^{-(k+1)} 
+  \bP(\tau_{i+1}\le L^3\ ; \ \tau_i> L^3).
\end{multline}
Now we need to estimate $\bP(\tau_{i+1}\le L^3\ ; \ \tau_i> L^3)$.
In order to do so we introduce an auxiliary dynamics $(\bar \sigma(t))_{t\ge 0}$, that coincides 
with $\sigma^-(t)$ up to time  
$iL(\log L)^{c_1+c_2}$, and which for $t\ge iL(\log L)^{c_1+c_2}$, uses the same clock 
process and the same coin-flips (see Remark \ref{constru}) 
as $(\sigma(t))_{t\ge i(\log L)^{c_1+c_2}}$ but cancels the moves that create a $-$ outside $S^3_{r-i}$.
Notice that up to time $\tau_i$, $\sigma^-(t)$ and $\bar \sigma(t)$ coincide and therefore 
\begin{multline} \label{ferte}
 \bP(\tau_{i+1}\le L^3\ ; \ \tau_i> L^3)\\
=\bP(\exists t\in \left[(i+1)L(\log L)^{c_1+c_2},L^3\right], \exists x \in S^3_r\setminus S^3_{r-(i+1)}, 
\ \bar \sigma_x(t)=-; \tau_i > L^3)\\
\le \bP(\exists t\in \left[(i+1)L(\log L)^{c_1+c_2},L^3\right], \exists x \in S^3_r\setminus S^3_{r-(i+1)}, 
\ \bar \sigma_x(t)=-;\\
\forall x\in S^3_r\setminus S^3_{r-i},\ \sigma^-_x(iL(\log L)^{c_2+c_1})=+), 
\end{multline}
where the last line just uses
\begin{equation}
  \{\tau_i > L^3\}\subset \{ \forall x\in S^3_r\setminus S^3_{r-i},\ \sigma^-_x(iL(\log L)^{c_2+c_1})=+\}.
\end{equation}

Now let $\big(\hat \sigma^{\xi}(t)\big)_{t\ge0}$ denote the dynamics in $S^3_{r-i}\setminus S^3_{r-2(\log L)^{c_2}}$
with  boundary condition $+$ on $\partial S^3_{r-i}$ and $-$ on $\partial^- S^3_{r-2(\log L)^{c_2}}$ started from $\xi$ and denote by $\hat \bP$ the associated probability. 
Using the Markov property at time 
$iL(\log L)^{c_1+c_2}$, one gets that the last line of \eqref{ferte} is equal to
\begin{multline}
 \bE \Bigg[ \ind_{\{\forall x\in S^3_r\setminus S^3_{r-i}, \sigma^-_x(iL(\log L)^{c_2+c_1})=+\}} 
\hat \bP\Big(\exists t\in [L(\log L)^{c_1+c_2}, L^3-iL (\log L)^{c_1+c_2}]\\
\exists x\in S^3_{r-i}\setminus S^3_{r-(i+1)},\
 \hat \sigma^{\sigma^-(iL(\log L)^{c_1+c_2})}_x(t)=-\Big)\Bigg]\\
\le \hat \bP\Big(\exists t\in [L(\log L)^{c_1+c_2}, L^3-iL (\log L)^{c_1+c_2}]\
\exists x\in S^3_{r-i}\setminus S^3_{r-(i+1)},\
 \hat \sigma^{-}_x(t)=-\Big)
\end{multline}
where in the first line $\sigma^-(iL(\log L)^{c_1+c_2})$ is, with some abuse of notation, considered as an element 
 of $\gO_{S^3_{r-i}\setminus S^3_{r-2(\log L)^{c_2}}}$. The second line is obtained by monotonicity.
\medskip

Note that, provided that $L$ is large enough,  for every values of $r$ and $i$ that we consider,
\begin{equation*}
  \left((r-i)-(r-2(\log L)^{c_2})\right)\ge \frac{1}{2}(\log (r-i))^{c_2}
\end{equation*}
 and  
\begin{equation*}
 L(\log L)^{c_1+c_2}\ge \frac{1}{16}r(\log r)^{c_1}
\end{equation*}
so that one can use Proposition \ref{uncapitano} and get
\begin{equation}
 \hat \bP\left(
\exists t\in [L(\log L)^{c_1+c_2}, L^3-iL (\log L)^{c_1+c_2}/2]\ \exists x\in S^3_{r-i}\setminus S^3_{r-(i+1)},
 \hat \sigma^{-}_x(t)=-\right)\le L^{-k-1},
\end{equation}
and therefore
\begin{equation}\label{crocrco}
  \bP(\tau_{i+1}\le L^3\ ; \ \tau_i> L^3)\le L^{-k-1}.
\end{equation}
Plugging \eqref{crocrco} into \eqref{troisis} gives \eqref{troisinc} for $i+1$ and
ends the induction step.

%
%

\end{proof}

We are now ready to prove Proposition \ref{maintool}. The main idea in the proof is to control the dynamics by cutting 
$\mathcal C_L$ into $L$ three-dimensional slices and to control the evolution of each slice by using the results we have for the dynamics in 
three-dimensions. However this stochastic domination is not
straightforward and comes from the \textit{ad hoc} construction of the sets $\mathcal C_L^{(i)}$.

\begin{rem}\rm
The geometric strategy we use to control the evolution of the set of $-$ in $\mathcal C_L$ presents some similarities with the one 
used by Caputo, Martinelli and Toninelli to prove Theorem 4.1 in \cite{CMT} concerning the mixing time of the dynamics of volume-biased plane 
partitions. 
\end{rem}

Recall that we consider the dynamics in the cylinder  
$\mathcal C_L$
with boundary condition $\eta_0$ described in the introduction ($+$ everywhere except at the top).
Define

\begin{equation}
 T_i:=\inf\left\{ t\ge iL(\log L)^{c_0} \ | \ 
\exists x\in \mathcal C_L\setminus  \mathcal C_L^{(i)},\ \sigma_x^-(t)=-\right\}.
\end{equation}
Proposition \ref{maintool} can be expressed as: for all $i\in\{1, \dots, 4L\}$, for every $k$ and for sufficiently large $L$ (how large depending on $k$)
\begin{equation}
 \bP(T_i\le L^3)\le iL^{-(k+1)}. 
\end{equation}
We prove it by induction on $i$.

 We start with the case $i=1$. In that case, one just has to show
that the set of $-$ in the first slice of the cylinder $\mathcal C_L$ has decreased, \textit{i.e.} that (recall notation \eqref{sll})
\begin{multline}
 \bP\big(
\exists t\in [L(\log L)^{c_0}, L^3],
 \exists x\in \big( \mathcal S_{L,0}\setminus \mathcal S_{L,1}\big)\times \{1\},\
 \sigma^{-}_x(t)=-\big)\le L^{-(k+1)}.
\end{multline}

By monotonicity in the boundary condition, the evolution of the spins on 
$\mathcal S_{L,0}\times\{1\}$ for the dynamics on $\mathcal C_L$ with
$\eta_0$ boundary condition dominates the evolution of the spins for the dynamics on
$\mathcal S_{L,0}\times\{1\}$ with $+$ boundary condition on $(\mathcal S_{L,0}\times\{0\} )
\cup (\partial \mathcal S_{L,0}\times\{1\})$ and $-$ on $\mathcal S_{L,0}\times\{2\}$. 

\medskip

One can then check that the latter dynamics corresponds
(up to a trivial bijection) to the (three-dimensional) dynamics on $\mathcal S_{L,0}$ with $+$ boundary condition.
Indeed, for each $x\in \mathcal S_{L,0}\times\{1\}$, due to boundary condition one has $\sigma_{x+e_4}(t)=-$, $\sigma_{x-e_4}(t)=+$ 
(where $e_4:=(0,0,0,1)$), so that
the effects on the dynamics of the neighbors in the fourth direction above and below cancel out.
Once one has noticed this, the case $i=1$ is an immediate consequence of Corollary \ref{ndg} (without the restriction of
 having spin blocked inside the inner-sphere).

%

\medskip

For the induction step, we use a strategy similar to the one used in the proof of Corollary \ref{ndg}. First, notice that
\begin{multline}
 \bP(T_{i+1}\le L^3)\le \bP(T_i\le L^3)+\bP(T_{i+1}\le L^3\ ; T_i> L^3)\\
\le iL^{-(k+1)}+\bP(T_{i+1}\le L^3\ ; T_i> L^3),
\end{multline}
where we used the induction hypothesis in the last inequality.
Now we need to estimate $\bP(T_{i+1}\le L^3 \ ; \ T_i> L^3)$.
In order to do so we introduce an auxiliary dynamics $(\bar \sigma(t))_{t\ge 0}$, that coincides with $(\sigma^-(t))_{t\ge 0}$ up to time $
iL(\log L)^{c_0}$ and which for $t\ge iL(\log L)^{c_0}$ uses the same clock 
process and the same coin-flips (see Remark \ref{constru}) 
as $(\sigma(t))_{t\ge iL(\log L)^{c_0}} $ but cancels the moves that create a $-$ outside of $\mathcal C_L^{(i)}$.
Note that $\bar \sigma(t)$ and $\sigma(t)$ coincide up to time $T_{i}$.
Therefore

\begin{multline} \label{ferte2}
 \bP(T_{i+1}\le L^3\ ; \ T_i> L^3)\\
=\bP(\exists t\in \left[(i+1)L(\log L)^{c_0},L^3\right] \exists x \in \mathcal C_L\setminus \mathcal C^{(i+1)}_L, \ 
\bar \sigma_x(t)=-\ ; \ T_i > L^3)\\
\le \bP(\exists t\in \left[(i+1)L(\log L)^{c_0},L^3\right] \exists x \in \mathcal C_L\setminus \mathcal C^{(i+1)}_L, \ \bar \sigma_x(t)=-\ ;\\
\forall x\in \mathcal C_L\setminus \mathcal C^{(i)}_L, \sigma^-_x(iL(\log L)^{c_0})=+).
\end{multline}
Now let $\big(\hat \sigma^{-}(t) \big)_{t\ge0}$ denote the dynamics in $\cC_L^{(i)}$ with $-$ boundary condition on 
$(S^3_{8L(\log L)^{c_2}}\times \{L+1\})\cap \partial \cC_L^{(i)}$ (the top)
and $+$ everywhere else.  As in the proof of Corollary \ref{ndg}, by the Markov property at time $iL(\log L)^{c_0}$  and monotonicity
the last line of \eqref{ferte2} is at most

\begin{equation}\label{ferte3}
 \bP\left[\exists t \in \left[L(\log L)^{c_0},L^3-iL(\log L)^{c_0}\right]
 \exists x \in \cC_L^{(i)}\setminus \cC_L^{(i+1)},\ \hat \sigma^{-}_x(t)=-\right].
\end{equation}

Now notice, that with our particular construction  of $\cC_L^{(i)}$ (see Figure \ref{fg}) the 
$j$-th slice of $\cC_L^{(i)}$ is equal to the $j+1$-th slice of $\cC_L^{(i+2)}$ for all $j\le \lceil i/2\rceil$, so that
 for every $x\in\cC_L^{(i)}\setminus \cC_L^{(i+2)}$
one has
$x\pm e_4 \notin \cC_L^{(i)}\setminus \cC_L^{(i+2)}$.

\medskip

More precisely one has for every $i\in\{0,\dots, 2L-2\}$
\begin{multline}\label{bdecop}
 \partial\left(\cC_L^{(i)}\setminus \cC_L^{(i+2)}\right)=(\cC_L^{(i)}\setminus \cC_L^{(i+2)}-e_4)\sqcup(\cC_L^{(i)}\setminus \cC_L^{(i+2)}+e_4)\\
\sqcup\left((\partial^- \mathcal S_{L,(i+2)(\log L)^{c_2}} \times \{1\}\right) \sqcup \left( (\partial \mathcal S_{L,0})\times \{\lceil i/2 \rceil+1\} \right).
\end{multline}
The same holds for $i\in\{2L-1, \dots 4L\}$, but with $(\partial \mathcal S_{L,0})\times \{\lceil i/2 \rceil+1\}$
replaced by $(\partial \mathcal S_{L,(i-2(L-1))})\times \{L\}$. Here $\sqcup$ denotes the disjoint union.

\medskip

Let $(\tilde \sigma(t))_{t\ge 0}$ denote the dynamics in $\cC_L^{(i)}\setminus \cC_L^{(i+2)}$ with the following boundary condition $\eta^{(i)}$:
\begin{itemize}
 \item $-$ on $(\cC_L^{(i)}\setminus \cC_L^{(i+2)})+e_4$ and on $(\partial^- \mathcal S_{L,(i+2)}) \times \{1\}$, 
 \item $+$ on $(\cC_L^{(i)}\setminus \cC_L^{(i+2)})-e_4$ and on $(\partial \mathcal S_{L,0})\times \{\lceil i/2 \rceil+1\}$ or $(\mathcal S_{L,(i-2(L-1))})\times \{L\}$
\end{itemize}
(see Figure \ref{fg} for a graphical description of the boundary conditions).
Once again, monotonicity in the boundary condition gives that \eqref{ferte3} is at most

\begin{equation}
 \bP\left[\exists t \in \left[L(\log L)^{c_0},L^3\right] 
\exists x \in \cC_L^{(i)}\setminus \cC_L^{(i+1)},\ \tilde \sigma^{-}_x(t)=-\right].
\end{equation}
To finish the proof we just need to prove that the above quantity is less than $L^{-k-1}$.

\medskip

The key point is to notice that the boundary of $\cC_L^{(i)}\setminus \cC_L^{(i+2)}$ is the union of the boundary of its slices
, i.e.\
\begin{equation}
  \partial\left(\cC_L^{(i)}\setminus \cC_L^{(i+2)}\right)=\bigcup_{j=1}^L
\partial \left[\left(\mathcal S_{L,(i-2(j-1))_+}\setminus \mathcal S_{L,(i-2(j-1)+2)_+}\right)\times \{j\}\right].
\end{equation}
Therefore
the evolution of the spins in the different slices are independent.
Moreover, due to our choice of boundary condition, the evolution of
$(\tilde \sigma(t))_{t\ge 0}$ in the $j$-th slice of $\mathcal C_L^{(i)}\setminus \mathcal C_L^{(i+2)}$
 
\begin{center}
 $\left(\mathcal S_{L,(i-2(j-1))_+}\setminus \mathcal S_{L,(i-2(j-1)+2)_+}\right)\times \{j\}$
\end{center}
 is the same (up to a trivial bijection) as the evolution of the spins for the dynamics in 
\begin{center}
 $\mathcal S_{L,(i-2(j-1))_+}\setminus \mathcal S_{L,(i-2(j-1)+2)_+}$
\end{center}
with $+$ boundary conditions on $\partial \mathcal S_{L,(i-2(j-1))_+}$ and  $-$ on  $\partial^{-} S_{L,(i-2(j-1)+2)_+}$.

\medskip

The reason for that is that for every $x\in\left(\mathcal S_{L,(i-2(j-1))_+}\setminus \mathcal S_{L,(i-2(j-1)+2)_+}\right)\times \{j\}$
the boundary condition we have chosen imposes that $\sigma_{x-e_4}=+$ and $\sigma_{x+e_4}=-$ and therefore, 
influence of neighbors along the fourth direction cancels out.
For the rest of the boundary, $\eta^{(i)}$ imposes
 $+$ boundary conditions on $(\partial \mathcal S_{L,(i-2(j-1))_+})\times \{j\}$ and $-$ on  $(\partial^{-} S_{L,(i-2(j-1)+2)_+})\times \{j\}$.
This appears well on figure \ref{fg}.

 \begin{figure}[hlt]
 \begin{center}
 \leavevmode 
 \epsfxsize =14 cm
 \psfragscanon
 \psfrag{CLIP2}{$\mathcal C_L^{(i+2)}$}
 \psfrag{Z}{$\mathbb Z$}
 \psfrag{Z3}{$\mathbb Z^3$}
 \epsfbox{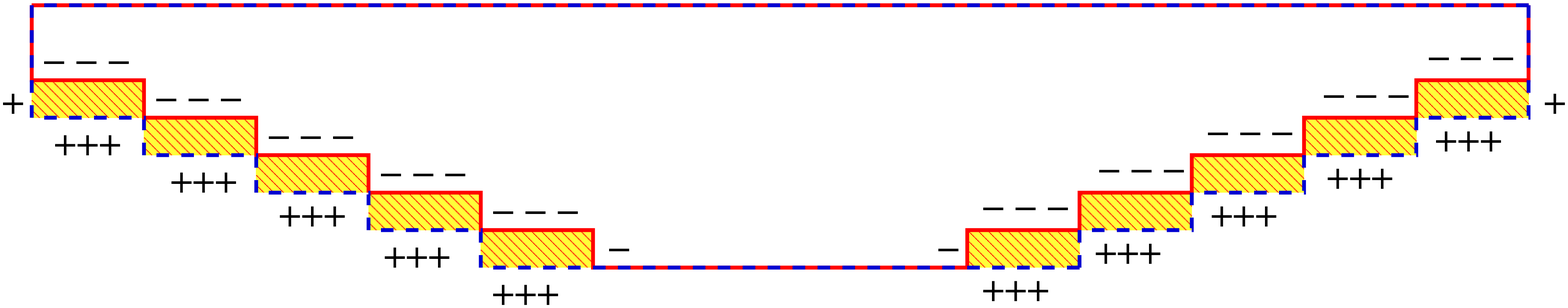}
 \end{center}
 \caption{ \label{fg} Schematic two dimensional view of the boundary condition considered for the dynamics 
$\big(\tilde \sigma(t)\big)_{t\ge0}$ (for $i\le 2L-2$, $i$ even). 
 Here the full line denotes the border of $\mathcal C_L^{(i+2)}$, the thick dotted line denotes the border of 
  $\mathcal C_L^{(i)}$  and the shadowed region is $\mathcal C_L^{(i)}\setminus \mathcal C_L^{(i+2)}$.
 The boundary conditions are also represented: for every site in
 $\mathcal C_L^{(i)}\setminus \mathcal C_L^{(i+2)}$, the neighbors along the fourth direction are outside of 
 $\mathcal C_L^{(i)}\setminus \mathcal C_L^{(i+2)}$ and the two neighbors along that direction are given
  opposite spins by the boundary conditions.}
 \end{figure}

Hence, using again Corollary \ref{ndg}, one gets that for $L$ large enough, for every  $i\in[1,4L]$, $j\in [1,L]$, one has

\begin{multline}
  \bP\big[\exists t\in \left[L(\log L)^{c_0}, L^3\right]
\exists x\in \left(\mathcal S_{L,(i-2(j-1))_+}\setminus \mathcal S_{L,(i-2(j-1)+1)_+}\right)\times \{j\} ,\\
 \tilde \sigma_x^{-}(t)=-\big]\le L^{-(k+2)}.
\end{multline}
Using the union bound and summing over $j\in \{1,\dots,L\}$ one gets that 
\begin{multline}
 \bP(T_{i+1}\le L^3\ ;\ T_{i}> L^3)\\
\le  \bP\left[\exists t \in \left[L(\log L)^{c_0},L^3\right] 
\exists x \in \cC_L^{(i)}\setminus \cC_L^{(i+1)},\ \tilde \sigma^{-}_x(t)=-\right]\le L^{-(k+1)},
\end{multline}
which finishes the proof.
\qed

\section{From $4$ dimensions to $d$ dimensions}

We briefly indicate how the proof should be modified to work also in higher dimensions.
One replaces $\mathcal C_L$ by

\begin{equation}
 \mathcal C_{L,d}:=  S^3_{2 d L (\log L)^{c_2}} \times \{1,\dots,L\}^{d-3}.
\end{equation}
One considers the dynamics in $\cC_{L,d}$ with boundary condition
\begin{itemize}
 \item  $+$ on $(\partial S^3_{2 d L (\log L)^{c_2}})\times \{1,\dots,L\}^{d-3}$ and on 
$\bigcup_{i=1}^{d-3} S^3_{2 d L (\log L)^{c_2}}\times F_i$ 
where $F_i:=\{1,\dots,L\}^{i-1}\times\{0\}\times \{1,\dots,L\}^{d-3-i}$.
 \item $-$ on $\bigcup_{i=1}^{d-3} S^3_{2 d L (\log L)^{c_2}}\times F'_i$ where $F'_i:=\{1,\dots,L\}^{i-1}\times\{L+1\}\times \{1,\dots,L\}^{d-3-i}$.
\end{itemize}
 
Define:

\begin{equation}\begin{split}
 \mathcal S_{L,d}^{(i,k)}&:=S^{3}_{2d L \log L^{c_2}-(i-k+2(d-3))_+(\log L)^{c_2}},\\
 \mathcal C_{L,d}^{(i)}&:=\mathcal C_{L,d} \cap \left\{z=(z_1,z_2,z_3,z_{(d-3)})\in \Z^d \ \big| 
\ (z_1,z_2,z_3)\in \mathcal S^{(i,|z^{(d-3)}|_1)}_{L,d}\right\}
\end{split}
\end{equation}

where $z^{(d-3)}=(z_4,\dots,z_d)\in \bbZ^{d-3}$ and $|\dots|_1$ denotes the $l_1$ norm on $\bbZ^{d-3}$
\begin{equation}
|z^{(d-3)}|_1:= \sum_{j=4}^d |z_j|.
\end{equation}

Then one can prove the following generalization of Proposition \ref{maintool}
\begin{proposition} \label{maintool2}
 For every $k>0$, for every $i\in \{1,\dots, 2(d-2)L \}$ one has
\begin{equation}
 \bP\left[\exists t\in \left[iL(\log L)^{c_0}, L^3\right]\ 
\exists x\in \mathcal C_{L,d}\setminus  \mathcal C_{L,d}^{(i)},\ \sigma_x^-(t)=-\right]=O(L^{-k}).
\end{equation}
\end{proposition}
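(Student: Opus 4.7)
The plan is to mirror the proof of Proposition \ref{maintool} essentially line by line, replacing the single extra direction $e_4$ by the bundle $e_4,\ldots,e_d$ and the height $z_4$ by the $\ell^1$-norm $|z^{(d-3)}|_1$. I would set
\[
T_i:=\inf\left\{t\ge iL(\log L)^{c_0}\ :\ \exists x\in\mathcal{C}_{L,d}\setminus\mathcal{C}_{L,d}^{(i)},\ \sigma_x^-(t)=-\right\},
\]
and prove by induction on $i\in\{1,\ldots,2(d-2)L\}$ that $\bP(T_i\le L^3)\le iL^{-(k+1)}$, which is equivalent to Proposition \ref{maintool2}. The base case $i=1$ would be handled exactly as in $d=4$: the shell $\mathcal{C}_{L,d}\setminus\mathcal{C}_{L,d}^{(1)}$ is non-empty only on $O(1)$ slices with small $|z^{(d-3)}|_1$, and on each such slice the $+$ boundary on the outer $S^3$-face and on the lower faces $F_{j-3}$, together with the $-$ boundary on the upper faces $F'_{j-3}$, pin the pair $x\pm e_j$ ($j=4,\ldots,d$) to opposite signs; the contributions to the majority rule from these $2(d-3)$ extra-direction neighbors therefore cancel pairwise, the dynamics on each slice is in bijection with a three-dimensional boundary-driven dynamics on a thin spherical annulus, and Corollary \ref{ndg} plus a union bound gives the $i=1$ bound.

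For the inductive step I would copy verbatim the two-step construction of \eqref{ferte2}--\eqref{ferte3}: first the auxiliary dynamics $\bar\sigma$ that coincides with $\sigma^-$ up to time $iL(\log L)^{c_0}$ and afterwards blocks any update that would create a $-$ outside $\mathcal{C}_{L,d}^{(i)}$, and then the dominating dynamics $\tilde\sigma$ on the thickened shell $\mathcal{C}_{L,d}^{(i)}\setminus\mathcal{C}_{L,d}^{(i+2)}$ with boundary condition $\eta^{(i)}$ putting $-$ on every site of the form $x+e_j$ and $+$ on every site of the form $x-e_j$ ($x$ in the shell, $j\in\{4,\ldots,d\}$), together with the natural $\pm$ on the outer $S^3$-face and on the $F_{j-3},F'_{j-3}$ faces. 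By the Markov property at time $iL(\log L)^{c_0}$ and monotonicity in the boundary condition, it would suffice to show that $\tilde\sigma$ has no $-$ in the thinner shell $\mathcal{C}_{L,d}^{(i)}\setminus\mathcal{C}_{L,d}^{(i+1)}$ at any time in $[L(\log L)^{c_0},L^3]$ with probability $1-O(L^{-(k+1)})$. Granted the geometric decomposition below, the thickened shell partitions as a disjoint union of three-dimensional spherical annuli indexed by $z^{(d-3)}\in\{1,\ldots,L\}^{d-3}$, to which Corollary \ref{ndg} applies slice by slice (with the constant $k$ there replaced by $k+d-2$), and the required bound follows by a union bound over the $\le L^{d-3}$ slices. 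The reduction from this induction to Theorem \ref{mainres}, placing a translate of the cube $\Gamma_L$ inside $\mathcal{C}_{L,d}\setminus\mathcal{C}_{L,d}^{(2(d-2)L)}$ and invoking monotonicity in the boundary condition, then carries over verbatim.

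The main obstacle is the analog of the key decomposition \eqref{bdecop} in higher dimension: one must verify that for every $x\in\mathcal{C}_{L,d}^{(i)}\setminus\mathcal{C}_{L,d}^{(i+2)}$ and every $j\in\{4,\ldots,d\}$, $x+e_j\in\mathcal{C}_{L,d}^{(i+2)}\cup F'_{j-3}$ and $x-e_j\in(\mathcal{C}_{L,d}\setminus\mathcal{C}_{L,d}^{(i)})\cup F_{j-3}$, so that the two extra-direction neighbors of each site have opposite, boundary-pinned signs. This follows from the identity $\mathcal{S}^{(i,k+1)}_{L,d}=\mathcal{S}^{(i-1,k)}_{L,d}$ forced by the $\ell^1$-based definition, provided the sphere radii of adjacent-$|z^{(d-3)}|_1$ slices fit together exactly rather than overlap. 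A subtlety specific to $d\ge 5$ is that, under the definition as stated, a unit change in $|z^{(d-3)}|_1$ only shifts the local sphere radius by $(\log L)^{c_2}$, whereas the thickened annular shell has radial thickness $2(\log L)^{c_2}$, so the slices would overlap and the naive slice-by-slice decomposition breaks down. Resolving this requires either recalibrating the coefficient of $(i-k+2(d-3))_+$ in the subtractor by a factor of $2$ (playing the role of the factor $2$ that appears in the four-dimensional formula $(i-2z_4+2)_+$), or splitting each annular shell into inner and outer halves and tracking the decomposition through halves rather than through whole annuli. Once this geometric calibration is settled, the remaining estimates of the four-dimensional proof carry through with only $d$-dependent constants.
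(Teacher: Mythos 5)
Your proposal follows the paper's own (sketched) argument essentially step for step: induction on $i$, the blocked auxiliary dynamics, reduction to the shell $\mathcal C_{L,d}^{(i)}\setminus\mathcal C_{L,d}^{(i+2)}$ with a boundary condition pinning the $2(d-3)$ extra-direction neighbours to opposite signs, slice-by-slice application of Corollary \ref{ndg}, and a union bound over the $L^{d-3}$ slices. Your observation about the calibration of $\mathcal S_{L,d}^{(i,k)}$ is also correct and worth recording: as written, a unit step in $|z^{(d-3)}|_1$ shifts the local radius by only $(\log L)^{c_2}$ while the shell has radial thickness $2(\log L)^{c_2}$ (indeed, for $d=4$ the displayed formula does not reduce to the $(i-2z_4+2)_+$ of Section 3), so the subtractor must carry a factor $2$ in front of $|z^{(d-3)}|_1$ for adjacent slices to be disjoint adjacent annuli; your proposed fix restores exactly the property needed for the analogue of \eqref{bdecop}.
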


The proof is performed in the same way, using induction on $i$. Note that the proof is still based on 
the three-dimensional result Corollary \ref{ndg} and not on an higher dimensional version of it (that we would be unable to prove). 
Coupling, stochastic comparison and identification with 
the three-dimensional dynamics are valid in that case also.
The fact that on any intermediate result, the value of $k$ (when one shows a quantity is less than $L^{-k}$) can be chosen arbitrarily large, 
allows to perform a union bound in any
 dimension without any harm.
In the induction step, using coupling and stochastic comparison, one is left to study the dynamics in 
$\mathcal C_{L,d}^{(i)}\setminus \mathcal C_{L,d}^{(i+2)}$ with specific boundary condition.

\medskip

Note that in analogy with \eqref{bdecop} the boundary of 
$\partial(\mathcal C_{L,d}^{(i)}\setminus \mathcal C_{L,d}^{(i+2)})$ can be decomposed in four parts as follows 

\begin{multline}
\partial(\mathcal C_{L,d}^{(i)}\setminus \mathcal C_{L,d}^{(i+2)})=\left[\bigsqcup_{j=4}^{d} (\mathcal C_{L,d}^{(i)}\setminus \mathcal C_{L,d}^{(i+2)}+e_j)\right]
\sqcup\left[\bigsqcup_{j=4}^{d} (\mathcal C_{L,d}^{(i)}\setminus \mathcal C_{L,d}^{(i+2)}-e_j)\right]\\
\sqcup  \left((\partial^- \mathcal S_{L,d}^{i+2,|{\bf 1}^{(d-3)}|_1})\times \{{\bf 1}^{(d-3)}\}\right)\sqcup \mathcal K,  
\end{multline}
where ${\bf 1}^{(d-3)}=(1,\dots,1)\in \Z^{d-3}$ and $\mathcal K$ denotes what remains of the boundary 
when the three other parts have been taken away. 

\medskip

Our specific boundary condition is

\begin{itemize}
 \item $+$ on $\bigsqcup_{j=4}^{d} (\mathcal C_{L,d}^{(i)}\setminus \mathcal C_{L,d}^{(i+2)}-e_j)$ and $\mathcal K$,
  \item $-$ on $\bigsqcup_{j=4}^{d} (\mathcal C_{L,d}^{(i)}\setminus \mathcal C_{L,d}^{(i+2)}-e_j)$  and $\left(\partial^- S_{L,d}^{i+2,|{\bf 1}^{(d-3)}|_1}\times \{{\bf 1}^{(d-3)}\}\right)$.
\end{itemize}
One can check that with these boundary conditions, the different slices
\begin{center}
  $\left(\mathcal S^{(i,|z_{(d-3)}|_1)}_{L,d}\setminus \mathcal S^{(i+2,|z_{(d-3)}|_1)}_{L,d}\right)\times \{z^{(d-3)}\}$ 
\end{center}
of the system evolve independently for each $z^{(d-3)}\in\{1,\dots,L\}^{d-3}$   and have the same evolution as
the corresponding three-dimensional dynamics with appropriate boundary condition (for $j\in \{4, \dots, d\}$ 
for every  $x\in \mathcal C_{L,d}^{(i)}\setminus \mathcal C_{L,d}^{(i+2)}$, $x\pm e_j$  belong to the boundary 
and our boundary condition ensures that the influence of spins of $x\pm e_j$ cancels out). One can apply 
Corollary \ref{ndg} and perform a union bound on $z^{(d-3)}$. Details are omitted.

\medskip

 {\bf Acknowledgements:} 
 The author is very grateful to P.\ Caputo, F.\ Martinelli and F.\ Toninelli for many useful scientific 
 discussions and precious help on the manuscript. This work was written during the authors postdoctoral  stay in Universita di Roma Tre 
 supported by European Research Council grant PETRELSS 228032.
He acknowledges hospitality and support.


\begin{thebibliography}{99}

\bibitem{CMT} P. Caputo, F. Martinelli, and F. Toninelli, \emph{Convergence to equilibrium of biased plane partitions}, 
Rand. Struct. Alg.
{\bf 39} (2011) 83–114.
\bibitem{CMST} P. Caputo, F. Martinelli, F. Simenhaus and F. Toninelli, \emph{ ``Zero'' temperature stochastic 3D Ising model 
and Dimer covering fluctuation: a first step towards mean curvature motion}, Comm. Pure Appl. Math.
{\bf 64} (2011) 778–831.

\bibitem{FH} D.\ S.\ Fischer and D.\ A.\ Huse, \emph{Dynamics of droplet fluctuation in pure and random Ising model}, Phys. Rev. B {\bf 35} (1987), 
6841-6848.
\bibitem{FSS} L. R. Fontes, R. H. Schonmann, V. Sidoravicius, \emph{Stretched Exponential Fixation in Stochastic Ising Models at Zero Temperature},
Commun. Math. Phys. {\bf 228} (2002) 495-518.
\bibitem{LST} H.\ Lacoin, F.\ Simenhaus, and F.\ Toninelli, \emph{Zero-temperature 2D Ising model and anisotropic curve-shortening flow} 
(2012) preprint arXiv:1112.3160 [math.PR].

\bibitem{Lif} I.\ M.\ Lifshitz, \emph{Kinetics of ordering during second order phase transitions}, 
Soviet Physics JETP {\bf 15} (1962), 939-942. 
\bibitem{LMST} E.\ Lubetzky, F.\  Martinelli, A.\ Sly, F.\ Toninelli,
\emph{Quasi-polynomial mixing of the 2D stochastic Ising model with "plus" boundary up to criticality}, to appear in J.
Eur. Math. Soc.
\bibitem{Peled} R. Peled \emph{ High-Dimensional Lipschitz Functions are Typically Flat}, (2010) preprint, arXiv:1005.4636 (math-ph).  

\bibitem{Wilson} D.B. Wilson {\em Mixing times of Lozenge
    tiling and card shuffling Markov chains},  Ann.\ Appl.\ Probab.\
  {\bf 14} (2004) 274--325.

\end{thebibliography}
\end{document}